\begin{document}
\title{Text Compression using Abstract Numeration System on a Regular
Language}
\author{Ryoma Sin'ya\inst{1}}
\institute{Department of Mathematical and Computing Sciences, Tokyo Institute of Technology\\\email{shinya.r.aa@m.titech.ac.jp}}

\maketitle
\begin{abstract}
An abstract numeration system (ANS) is a numeration system that provides
a one-to-one correspondence between the natural numbers and a regular language.
In this paper, we define an ANS-based compression as an extension of this
correspondence. In addition, we show the following results: 1) an average
compression ratio is computable from a language, 2) an ANS-based compression
runs in sublinear time with respect to the length of the input string, and 3)
an ANS-based compression can be extended to block-based compression
using a factorial language.
\keywords{abstract numeration system, regular language, factorial
 language, automaton, compression, combinatorics}
\end{abstract}

\section{Introduction}\label{sec:introduction}
Compression has been closely studied in the past and remains an important
issue in the real world today.
String compression methods that focus on the input
distribution, such as entropy-based compression, have been thoroughly investigated.
Grammar-based compressions have also been investigated.
This latter method has been formalized as the {\it smallest grammar problem}, which
constructs a minimal context-free grammar that can generate a
given string \cite{DBLP:journals/tit/CharikarLLPPSS05}.

Whereas these are the two major approaches to compression, an alternative
approach uses the ``ordering'' of a given string in the given
language, namely its {\it ranking}.
Goldberg and Sipser define ranking-based compression and show that
ranking in a context-free grammar is computable in
polynomial time \cite{Goldberg:1985:CR:22145.22194}. 
A summary of the ranking-based compression approach is available
in the literature \cite{Li:2008:IKC:1478784}.
In another approach to ranking, a
one-to-one correspondence between the natural numbers and a regular language,
namely an {\it abstract numeration system} (ANS), has been proposed independently
by Lecomte and Rigo \cite{DBLP:journals/corr/cs-OH-9903005}.
The mathematical properties and abilities of an ANS were studied in
\cite{Berth:2010:CAN:1941063}. 

The present paper presents an ANS-based compression on a regular language
that uses base conversion between two regular languages, and
lays the foundations of this approach. There are three main results from our research.
First, we analyze the complexity of the algorithm for an ANS (see Table \ref{table:complexity}, \S \ref{sec:algorithm}).
Second, we can find the average compression ratio of an ANS-based compression
(see Theorem \ref{theorem:ratio}, \S \ref{sec:compression_ratio}).
Third, we propose a block-based compression method that does not reduce the
compression ratio by using a factorial language (see Theorem \ref{theorem:sepcomp}, \S \ref{sec:sepcomp}).
\section{Preliminaries}\label{sec:preliminaries}
We first introduce some preliminary notions, notation, and theorems
from the literature \cite{Sakarovitch:2009:EAT:1629683,DBLP:journals/corr/abs-1010-5456,Berth:2010:CAN:1941063}.

\subsection{Regular languages and automata}
The set of nonnegative integers is denoted by $\nat \ni 0$, and the set of
positive real numbers is denoted by $\real$.
Let $A$ denote a finite set of symbols, with $A^n$ denoting the set
of all strings of length $n$ over $A$.
Let $A^*$ represent the set of all strings over $A$:
$A^* := \bigcup_{i=0}^{\infty} A^i.$
The string of length 0 is $\varepsilon$ ($A^0 = \{\varepsilon\}$).
A subset $L$ of $A^*$ is called a language over $A$.
The characters $w$ and $\sigma$ are used to denote a string $w \in A^*$ and
a symbol $\sigma \in A$. In addition, $|w|$ denotes the length of $w$ and
$w_i$, such that $1 \leq i \leq |w|$ denotes the $i$-th symbol of $w$.
The power set of a set $S$ is written $\powerset(S)$.
A finite automaton $\CA$ is a 5-tuple $\CA = (Q, A, \delta, I, F)$, comprising
a finite set of states $Q$, a finite set of input symbols $A$,
a transition function $\delta: Q \times A \rightarrow \powerset(Q)$, and a
set of start and accepting states $I \subseteq Q$ and $F \subseteq Q$.
The symbol like $|\CA|$ denotes the number of states for the finite automaton $\CA$. The set
of all acceptable strings of $\CA$ is denoted by $L(\CA)$.

\begin{definition}[\cite{Sakarovitch:2009:EAT:1629683}]\label{def:ufa}
An automaton $\CA = (Q, A, \delta, I, F)$ is an {\it unambiguous} finite
 automaton (UFA) when it satisfies the following two conditions:
\begin{enumerate}
 \item for every pair of states $(p, q)$ and every string $w$ in $A^*$,
	   there exists at most one transition from $p$ to $q$ on input $w$
	   ($q \in \delta(p, w)$);
 \item for every string $w$ in $L(\CA)$, there exists a unique $p$ in
	   $I$ and a unique $q$ in $F$ such that $q \in \delta(p, w)$. \ed
\end{enumerate}
\end{definition}

Any {\it deterministic} finite automaton (DFA) will clearly satisfy the UFA conditions. 
To emphasize that an automaton is deterministic, we use $\CD$ to denote a
DFA.
Similarly, we use $\CU$ to denote a UFA, with $\CN$ being used to denote a {\it
nondeterministic} finite automaton (NFA).
We will now focus on UFAs and DFAs. Although these classes are more limited than the NFA class, they are easy
to handle, and are compatible with counting functions because of their unambiguity. 

We introduce a {\it matrix representation}
for an automaton.
Because an automaton can be viewed as a digraph, its adjacency matrix
can be derived naturally.

\begin{definition}[\cite{Sakarovitch:2009:EAT:1629683}]\label{def:matrix_representation}
The adjacency matrix $\bm{M}(\CA) \in \nat^{|\CA| \times |\CA|}$ of an
 automaton $\CA$ is defined as $\bm{M}(\CA)_{pq} := \#\left\{ \sigma \in
 A \mid q \in \delta(p, \sigma) \right\}$.
The initial (row) vector $\bm{V}_I(\CA) \in \bool^{|\CA|}$ and accepting (column)
vector $\bm{V}_F(\CA)^T \in \bool^{|\CA|}$ are defined as follows:
\begin{eqnarray*}
\bm{V}_I(\CA)_q := \text{if} \; q \in I \;\; \text{then} \; 1 \;
 \text{else} \; 0, \;\;\;\;\;\; \bm{V}_F(\CA)_q &:=& \text{if} \; q \in F \; \text{then} \; 1 \; \text{else} \; 0.
\end{eqnarray*}
A triple $(\bm{M}(\CA), \bm{V}_I(\CA), \bm{V}_F(\CA))$ is called a
 matrix representation of $\CA$.
\ed
\end{definition}

In this paper, we will denote the set of eigenvalues of a matrix $\bm{M}$ by
$\lambda(\bm{M})$, and denote the zero vector by $\bm{0}$.

\subsection{Combinatorial complexity}\label{sec:combinatorial}
In formal language theory, {\it combinatorial complexity} is formulated
in terms of the complexity of the {\it counting function}\footnote{Also called
the {\it complexity function} or the {\it growth function}.} $C_L$ of a language
$L$ \cite{DBLP:journals/corr/abs-1010-5456,Berth:2010:CAN:1941063,journals/mst/ChoffrutG95a}, which is defined as follows:\;\;
$C_L(n) := \# \left\{ w \in L \mid |w| = n \right\} = \#(L \cap A^n),\; C_L^\leq(n) := \sum_{i=0}^n C_L(i)$.
We note that a counting function $C_L$ can be calculated by using a trim UFA.

\begin{lemma}[\cite{Berth:2010:CAN:1941063}]
\label{lemma:counting}
Consider a UFA $\CU$ such that $L(\CU) = L$. Then the following equation holds:
\[
 C_L(n) = \bm{V}_I \bm{M}^n \bm{V}_F.
\]
\end{lemma}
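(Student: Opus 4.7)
The plan is to expand the matrix product $\bm{V}_I \bm{M}^n \bm{V}_F$ as a double sum over initial and accepting states, interpret each entry $(\bm{M}^n)_{pq}$ combinatorially as a count of strings routing $p$ to $q$, and then use the two UFA conditions to argue that no string of $L$ of length $n$ is over- or under-counted.

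First I would prove by induction on $n$ that
\[
(\bm{M}^n)_{pq} = \#\{w \in A^n \mid q \in \delta(p, w)\}.
\]
The base case $n = 0$ holds because $\bm{M}^0 = \bm{I}$ and $\delta(p, \varepsilon) = \{p\}$, and the case $n = 1$ is exactly the definition of $\bm{M}(\CA)_{pq}$. For the inductive step, expanding
\[
(\bm{M}^{n+1})_{pq} = \sum_{r} (\bm{M}^n)_{pr} \bm{M}_{rq}
\]
and applying the inductive hypothesis identifies the right-hand side with the number of quadruples $(w, \sigma, r)$ such that $|w| = n$, $r \in \delta(p, w)$ and $q \in \delta(r, \sigma)$, i.e.\ the number of labeled paths of length $n+1$ from $p$ to $q$. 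Here UFA condition~1 enters crucially: it guarantees that distinct intermediate states $r$ cannot arise from the same string $w\sigma$, so this count of paths coincides with the count of distinct strings, as desired.

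Next I would use linearity to write
\[
\bm{V}_I \bm{M}^n \bm{V}_F = \sum_{p \in I}\sum_{q \in F} (\bm{M}^n)_{pq},
\]
so by the previous step this equals the number of triples $(p, q, w)$ with $p \in I$, $q \in F$, $w \in A^n$ and $q \in \delta(p, w)$. Condition~2 of the UFA definition then says that for every $w \in L(\CU)$ there is exactly one such pair $(p, q)$, whereas for $w \notin L(\CU)$ no such pair exists. Hence the sum equals $\#(L \cap A^n) = C_L(n)$.

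The only subtle step is the inductive claim, where one has to notice that UFA condition~1 is precisely what prevents the path-count and the string-count from diverging; everything else is routine matrix bookkeeping combined with condition~2 in the final summation.
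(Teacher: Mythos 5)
Your proof is correct and follows essentially the same route as the paper: interpret the entries of $\bm{M}^n$ combinatorially, sum over $I \times F$, and invoke unambiguity to identify paths with strings. You are slightly more explicit about where the two UFA conditions enter (condition~1 inside the induction to equate path-counts with string-counts, condition~2 in the final summation to ensure each accepted string is counted once), but this is the paper's argument spelled out in more detail rather than a genuinely different one.
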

\begin{proof}
The value $\bm{M}^n_{pq}$ of the adjacency matrix of the UFA $\CU$ is
exactly the number of paths of length $n$ from state $p$ to state $q$.
Each value in vector $\bm{V}_I \bm{M}^n$ is therefore the number of paths
of length $n$ from an initial state to one of the states.  
Consequently, the value obtained by multiplying the accepting vector by the
vector $\bm{V}_I \bm{M}^n$, $\bm{V}_I \bm{M}^n \bm{V}_F$ is the number
 of acceptable paths of length $n$.
Because $\CU$ is unambiguous, there exists a one-to-one correspondence
between paths and strings. \qed
\end{proof}

Next, we introduce the important theorem about the asymptotic growth of
a counting function for a regular language \cite{Shur:2008:CCR:1813695.1813728}.
(This theorem will be needed in \S \ref{sec:compression_ratio}.)
Here, we use the standard $\order$-, $\Omega$-,
$\Theta$-notations, and the following more complicated definitions.

\begin{definition}[\cite{Shur:2008:CCR:1813695.1813728}]\label{def:shurno}
The notation $f(n) = \bar\Theta(g(n))$ means that $f(n) = \order(g(n))$ and
 there exists a sequence $\{n_i\}_1^\infty$ and a constant $c > 0$ such
 that $f(n_i) \geq c \cdot g(n_i)$ for all $i$.
That is, the notation $\bar\Theta(g(n))$ is weaker than the notation
 $\Theta(g(n))$.
\ed
\end{definition}

\begin{definition}[\cite{DBLP:journals/corr/abs-1010-5456}]\label{def:index}
Consider an automaton $\CA$. The {\it index} of $\CA$: $\ind(\CA)$ is defined as
 the maximum eigenvalue\footnote{{\it Frobenius root}
 \cite{Shur:2008:CCR:1813695.1813728}.} of the adjacency matrix of
 $\CA$:
\[
 \ind(\CA) := \max \{ \alpha \in \lambda(\bm{M}(\CA)) \}.
\] 
The {\it polynomial index} of $\CA$: $\pd{\CA}$ is defined
 as the value obtained by subtracting 1 from the maximum number of
 strongly connected components (SCCs) of index $\ind(\CA)$ connected
 by a simple path in $\CA$. \ed
\end{definition}

\begin{theorem}[Shur \cite{Shur:2008:CCR:1813695.1813728}]
\label{theorem:shur}
Let a regular language $L$ be recognized by a trim DFA $\CD$. Then,
$C_L(n) = \bar\Theta(n^\pd{\CD} \ind(\CD)^n)$ holds.
\end{theorem}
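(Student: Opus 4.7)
My plan is to start from Lemma \ref{lemma:counting} (which applies since every DFA is a UFA) to get $C_L(n) = \bm{V}_I(\CD)\bm{M}(\CD)^n \bm{V}_F(\CD)$, and then analyze the asymptotics of the matrix power $\bm{M}(\CD)^n$ via spectral theory. Writing $\bm{M}(\CD)$ in Jordan normal form yields an expansion
\[
\bm{M}(\CD)^n = \sum_{\lambda \in \lambda(\bm{M}(\CD))} Q_\lambda(n)\,\lambda^n,
\]
where each $Q_\lambda(n)$ is a matrix whose entries are polynomials in $n$ of degree one less than the size of the largest Jordan block at $\lambda$. Since $\bm{M}(\CD)$ is a nonnegative integer matrix, the Perron--Frobenius theorem guarantees that the spectral radius is attained at the real eigenvalue $\ind(\CD)$, which therefore dominates the sum.

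The crucial combinatorial step is to relate the Jordan structure at $\lambda = \ind(\CD)$ to the SCC decomposition of $\CD$. I would reorder the states by SCC so that $\bm{M}(\CD)$ becomes block-upper-triangular with one diagonal block per SCC. Standard spectral theory of such forms shows that the largest Jordan block for $\lambda = \ind(\CD)$ has size equal to the maximum number of diagonal blocks of index $\ind(\CD)$ linked by nonzero off-diagonal entries, i.e., by simple paths in the SCC-condensation of $\CD$. Subtracting one yields exactly $\pd{\CD}$, giving at once the upper bound $C_L(n) = \order(n^{\pd{\CD}}\ind(\CD)^n)$.

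For the lower half of $\bar\Theta$, I would fix an extremal chain $S_0, \ldots, S_{\pd{\CD}}$ of SCCs of index $\ind(\CD)$ linked by simple paths, and use the trim property to connect $S_0$ to some initial state and $S_{\pd{\CD}}$ to some final state. Counting accepted words of length $n$ that traverse all $S_i$'s in this order---by decomposing each such word into a cycle contribution inside each $S_i$ plus fixed linking paths---produces $\Omega(n^{\pd{\CD}}\ind(\CD)^n)$ words: the free choice of how to distribute the $n$ letters across the $\pd{\CD}+1$ SCCs contributes the polynomial factor, while the cycle counts inside each SCC contribute the exponential.

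The principal obstacle is the periodicity of each $S_i$: the adjacency matrix of an SCC with index $\ind(\CD)$ typically has period $d_i \geq 1$, so maximal growth inside $S_i$ is realized only when the number of letters spent in $S_i$ lies in a specific residue class modulo $d_i$. The combined constraint restricts the counting argument to $n$ in a fixed residue class modulo $\mathrm{lcm}(d_0, \ldots, d_{\pd{\CD}})$, which is precisely why the theorem is stated with $\bar\Theta$ rather than $\Theta$. Constructing such a subsequence rigorously, and verifying that the leading coefficient it produces is strictly positive (i.e., that the dominant Perron eigenvector does not cancel against $\bm{V}_I(\CD)$ or $\bm{V}_F(\CD)$, which is where trimness is again essential), is the most delicate part of the argument.
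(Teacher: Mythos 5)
Your sketch cannot be checked against ``the paper's own proof'' because there is none: Remark~\ref{remark:shur}, immediately following the theorem, explicitly omits the proof and attributes the result to Shur~\cite{Shur:2008:CCR:1813695.1813728}. So the only meaningful question is whether your outline is sound as a standalone plan.

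Broadly it is, but it leans silently on a theorem whose content is the actual crux. The step you describe as ``standard spectral theory of such forms''---that after reordering states by SCC the largest Jordan block of $\bm{M}(\CD)$ at $\lambda = \ind(\CD)$ has size equal to the longest chain of SCCs of index $\ind(\CD)$ linked by paths in the condensation, hence polynomial degree $\pd{\CD}$---is precisely Rothblum's theorem on the index of the Perron root of a reducible nonnegative matrix. It is a genuine theorem, not an immediate consequence of block-triangular form, and as written it is the unproved heart of your upper bound; it must either be cited or reproved (Shur's own argument avoids it by counting walks through SCC chains directly). A secondary gap on the same side: the bound $C_L(n) = \order(n^{\pd{\CD}}\ind(\CD)^n)$ also needs the non-real peripheral eigenvalues (those of modulus $\ind(\CD)$ arising from the periods of the basic SCCs) to carry Jordan blocks no larger than the Perron root's, which your sketch does not address. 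Your lower-bound discussion, by contrast, correctly isolates the two points that force the statement to be $\bar\Theta$ rather than $\Theta$: the periods $d_i$ of the basic SCCs restricting $n$ to a residue class, and trimness ensuring the Perron contribution is not annihilated by $\bm{V}_I$ or $\bm{V}_F$. Supply Rothblum's theorem (or a direct walk-counting substitute) and handle the peripheral spectrum, and the plan goes through.
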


\begin{remark}
\label{remark:shur}
We omit the proof here, but note some known algorithmic
results for graphs and matrices.
Let $G$ be a digraph with $n$ vertices and $e$ paths. Then, 
SCC decomposition is computable in $\order(n + e) = \order(n^2)$.
The index of $G$ is approximately computable in $\order(n^3)$ using the classical {\it power method} \cite{Golub:1996:MC:248979}.
\ed
\end{remark}

Because the definitions of $\ind(\CD)$ and $\pd{\CD}$ are slightly complicated, it might be
hard to grasp the meaning of Theorem \ref{theorem:shur}.
We therefore introduce the following simple examples of Lemma
\ref{lemma:counting} and Theorem \ref{theorem:shur}.

\begin{figure}[t]
\begin{minipage}[t]{0.5\columnwidth}
\centering\scalebox{.6}{\includegraphics{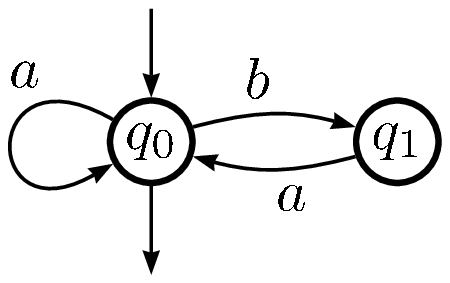}}
\caption{$\CD_{fib}$ such that $L(\CD_{fib}) = \{ a, ba \}^*$}
\label{fig:ex1}
\end{minipage}
\begin{minipage}[t]{0.5\columnwidth}
\centering\scalebox{.6}{\includegraphics{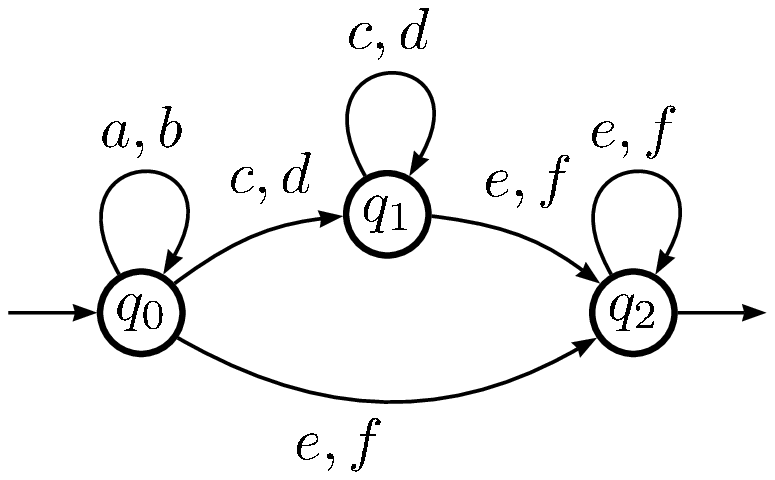}}
\caption{$\CD_{poly}$ such that $L(\CD_{poly}) = \{a,b\}^*\{c,d\}^*\{e,f\}^+$}
\label{fig:ex3}
\end{minipage}
\end{figure}

\begin{example}\label{ex1}
For the DFA $\CD_{fib}; \; L(\CD_{fib}) = \{a, ba\}^*$, as shown in Figure~\ref{fig:ex1},
consider the counting function $C_L$.
The matrix representation of $\CD_{fib}$ is:
\begin{eqnarray*}
\bm{M} = \begin{pmatrix}
				1 & 1\\
				1 & 0
				\end{pmatrix}, \;\;\;\; \bm{V}_I = \bm{V}_F^T = (1, 0).
\end{eqnarray*}
We can enumerate the sequence $C_L(n)$ up to five, as follows:
\begin{eqnarray*}
C_L(0) &=& \#\left\{ \varepsilon \right\} = 1, \;\;\;\; C_L(1) =
 \#\left\{ a \right\} = 1, \;\;\;\; C_L(2) = \#\left\{ aa, ba \right\} = 2,\\
C_L(3) &=& \#\left\{ aaa, aba, baa \right\} = 3, \;\;\;\; C_L(4) = \#\{
 aaaa, aaba, abaa, baaa, baba \} = 5.
\end{eqnarray*}
The adjacency matrix $\bm{M}$ is diagonalizable via a matrix $P$ because
$\bm{M}$ has the set of eigenvalues $\lambda(\bm{M})
= \left\{\alpha = \frac{1+\sqrt{5}}{2},\beta
=\frac{1-\sqrt{5}}{2} \right\}$. Here, we obtain the following equation
via Lemma \ref{lemma:counting}:
\begin{eqnarray*}
C_L(n) &=& \bm{V}_I \bm{M}^n \bm{V}_F = \bm{V}_I \bm{PD}^n\bm{P}^{-1}
 \bm{V}_F = \frac{1}{\sqrt{5}} \bm{V}_I \begin{pmatrix}
						\alpha^{n+1} - \beta^{n+1} & \alpha^n -
						\beta^n\\[2pt]
						\alpha^n - \beta^n & \alpha^{n-1} - \beta^{n-1}
					   \end{pmatrix} \bm{V}_F\\
&=& \frac{1}{\sqrt{5}}\left\{ \left( \frac{1+\sqrt{5}}{2}
								  \right)^{n+1} - \left(
								  \frac{1-\sqrt{5}}{2}
 \right)^{n+1}\right\}.
\end{eqnarray*}
From this, we find that $C_L(n)$ is the $(n+1)$-th Fibonacci
 number. \ed
\end{example}

\begin{example}
\label{ex2}
The index of $\CD_{fib}$ is $\frac{1+\sqrt{5}}{2}$. In addition,
the polynomial index of $\CD_{fib}$ is 0 because $\CD_{fib}$ contains
only one SCC, as does $\{q_0\}$.  From Theorem \ref{theorem:shur}, the
 following equation therefore holds. This matches the result of Example~\ref{ex1}. 
\[
 C_L(n) = \bar\Theta(n^\pd{\CD_{fib}} \ind(\CD_{fib})^n) = \bar\Theta\left(\left(\frac{1+\sqrt{5}}{2}\right)^n\right).
\]
\ued
\end{example}

\begin{example}\label{ex3}
For the DFA $\CD_{poly}$ such that $L(\CD_{poly}) = \{a,b\}^*\{c,d\}^*\{e,f\}^+$,
 as shown in Figure \ref{fig:ex3}, each state is a SCC
 and there are two self-loop transitions respectively. The index of
 $\CD_{poly}$ is therefore $\ind(\CD_{poly}) = 2$.
 Furthermore, the set of the SCCs that are connected by a simple path, except for each self-loop, are: 
 $\{\{q_0\},\{q_1\}\}, \;\; \{\{q_1\},\{q_2\}\}, \;\;
 \{\{q_0\},\{q_2\}\}, \;\; \{\{q_0\}, \{q_1\}, \{q_2\}\}.$
 The polynomial index of the DFA $\CD_{poly}$ is therefore $\pd{\CD_{poly}} =
 \#\{\{q_0\}, \{q_1\}, \{q_2\}\} - 1 = 2$. Consequently, $C_L(n)
 = \bar\Theta(n^2 2^n)$ holds from Theorem \ref{theorem:shur}.\ed
\end{example}

\begin{fact}[\cite{Shur:2008:CCR:1813695.1813728}]\label{fact:index}
For any DFA $\CD$ such that $L(\CD) = L$, its index $\ind(\CD)$ is 0 or
 greater than or equal to 1
 because the elements of adjacency matrices are positive integers.
 Moreover, there are only three possible patterns:
 1) $\ind(\CD) = 0 \Leftrightarrow$ $L$ is a finite set,
 2) $\ind(\CD) = 1 \Leftrightarrow$ $C_L(n)$ has polynomial growth, or
 3) $\ind(\CD) > 1 \Leftrightarrow$ $C_L(n)$ has exponential growth.
 In addition, the index of $\CD$ equals the maximum index of each SCC
 in $\CD$.\ed
\end{fact}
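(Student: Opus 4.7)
The plan is to reduce everything to a Perron--Frobenius analysis on the strongly connected components (SCCs) of $\CD$, exploiting the fact that the adjacency matrix $\bm{M}(\CD)$ has nonnegative integer entries.

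I would begin with the last sentence. Reorder the states of $\CD$ so that the SCCs $\CD_1,\ldots,\CD_k$ appear as contiguous blocks in an order consistent with a topological sort of the condensation DAG. Then $\bm{M}(\CD)$ becomes block upper triangular with diagonal blocks $\bm{M}(\CD_i)$, so its characteristic polynomial factors as the product over the blocks and $\lambda(\bm{M}(\CD))=\bigcup_i\lambda(\bm{M}(\CD_i))$. Hence $\ind(\CD)=\max_i\ind(\CD_i)$, proving the final claim and reducing the other statements to the SCC level.

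Next I would establish the dichotomy $\ind(\CD)\in\{0\}\cup[1,\infty)$. By the reduction it suffices to check each SCC. A trivial SCC (one state with no self-loop) contributes the $1\times 1$ zero block, with index $0$. Otherwise strong connectivity plus at least one transition gives a directed cycle of some length $\ell$ and integer weight $w\geq 1$, so $\mathrm{trace}(\bm{M}(\CD_i)^{j\ell})\geq w^{j}$ for every $j\geq 1$ (iterate the cycle $j$ times). Combined with the Perron--Frobenius bound $\mathrm{trace}(\bm{M}(\CD_i)^{j\ell})\leq |\CD_i|\cdot \ind(\CD_i)^{j\ell}$, letting $j\to\infty$ gives $\ind(\CD_i)\geq w^{1/\ell}\geq 1$.

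For the trichotomy I would invoke Theorem \ref{theorem:shur}. If $\ind(\CD)=0$ then every SCC is trivial, so $\CD$ is acyclic and every accepted string has length bounded by $|\CD|$, hence $L$ is finite. If $\ind(\CD)=1$ the theorem yields $C_L(n)=\bar\Theta(n^{\pd{\CD}})$, polynomial growth. If $\ind(\CD)>1$ the theorem yields a subsequence with $C_L(n_i)\geq c\,\ind(\CD)^{n_i}$, which is exponential. The reverse implications follow automatically because the three regimes---finite, polynomial, exponential---are mutually exclusive and, by the dichotomy above, exhaustive.

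The main subtlety I anticipate is precisely the step $\ind(\CD_i)\geq 1$: one must use Perron--Frobenius to know that $\ind(\CD_i)$ actually realizes the maximum modulus among the eigenvalues, so that the trace upper bound $\sum_\mu \mu^{j\ell}\leq |\CD_i|\cdot\ind(\CD_i)^{j\ell}$ is not spoiled by destructive cancellation among complex conjugate eigenvalues of equal modulus.
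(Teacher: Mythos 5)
The paper does not give a proof of this statement: it is labeled a \emph{Fact} and cited directly to Shur's paper \cite{Shur:2008:CCR:1813695.1813728}, so there is no in-paper proof to compare against. Evaluated on its own, your reconstruction is correct and follows the standard Perron--Frobenius route. The block-upper-triangular reduction of $\bm{M}(\CD)$ along a topological order of the condensation gives $\lambda(\bm{M}(\CD))=\bigcup_i\lambda(\bm{M}(\CD_i))$ and hence the final claim; the cycle-iteration trace bound $w^j\le\mathrm{trace}(\bm{M}(\CD_i)^{j\ell})\le|\CD_i|\cdot\ind(\CD_i)^{j\ell}$ gives $\ind(\CD_i)\ge w^{1/\ell}\ge 1$ for any nontrivial SCC; and the trichotomy then drops out of Theorem~\ref{theorem:shur}. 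Your flagged subtlety --- that one must use Perron--Frobenius to identify the ``maximum eigenvalue'' of Definition~\ref{def:index} with the spectral radius so that the trace upper bound is not spoiled by cancellation --- is the right thing to worry about, and is exactly what the footnote ``Frobenius root'' is licensing.

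One caveat you should make explicit: the reverse implications in the trichotomy (and your appeal to Theorem~\ref{theorem:shur}, which is stated only for trim DFAs) silently require $\CD$ to be trim. Without trimness one can adjoin to a finite-language DFA an unreachable SCC containing a cycle, obtaining $\ind(\CD)\ge 1$ while $L(\CD)$ stays finite, so the claimed equivalences fail. The Fact therefore inherits the trimness hypothesis of Theorem~\ref{theorem:shur}; this is consistent with Definition~\ref{redef:index} treating $\ind$ as a language invariant, which is only well defined over trim automata. Modulo that implicit hypothesis, your argument is complete.
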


Theorem \ref{theorem:shur} clarifies the asymptotic growth of the counting
function. We now know that the index and polynomial index of a DFA $\CD$ are
not simply values, but are specific values of a regular language $L$
recognized by the DFA $\CD$. 
For this reason, we introduce the following additional definition.
\setcounter{definition}{3}
 \begin{definition}[addition]\label{redef:index}
  Let a regular language $L$ be recognized by a DFA $\CD$,
\[
\ind(L) := \ind(\CD), \;\;\;\; \pd{L} := \ind(\CD).
\]
\ued
\end{definition}

\subsection{Abstract numeration system}
A {\it numeration system} is a system that represents a number as a string
and vice versa, and is an area of mathematics in itself.
Its representation abilities and the properties of algebraic operations
on it have been widely studied
\cite{Berth:2010:CAN:1941063,DBLP:journals/corr/cs-OH-9903005,Allouche:2003:AST:861371}. An
{\it abstract numeration system} has been proposed by Lecomte and Rigo
in 1999 \cite{DBLP:journals/corr/cs-OH-9903005}.

\begin{definition}[\cite{Berth:2010:CAN:1941063}]
Assume that $(A, <)$ is a totally ordered set of symbols. Then, the set $A^*$ is
 totally ordered by {\it radix order} $\prec$
\footnote{Sometimes called {\it length-lexicographic order} or {\it
 genealogical order}.} defined as follows. Let $u, v$ be two strings in
 $A^*$. We write $u \prec v$ if either $|u| < |v|$ or $|u| = |v|$,
 and there exists $r,s,t \in A^*, \; a, b \in A$ with $u = ras, v =
 rbt$ and $a < b$. By $u \preceq v$, we mean that either $u \prec v$ or
 $u = v$. The set $A^*$ can be totally ordered by lexicographic
 order. \ed
\end{definition}
\begin{definition}[\cite{Berth:2010:CAN:1941063}]\label{def:ans}
An ANS $\CS$ is a triple $(L, A, <)$ comprising a totally ordered finite set of symbols
 $(A, <)$ and an infinite regular language $L$ over $(A, <)$. The 
 order isomorphism  map $\rep_\CS: \nat \rightarrow L$ is the one-to-one
 correspondence mapping $n \in \nat$ to the $(n+1)$-th string
 in the radix-ordered language $L$. The inverse map is denoted by
 $\val_\CS: L \rightarrow \nat$. \ed
\end{definition}

An ANS is not only simple but also flexible.
The following example shows that an ANS can represent ordinary
$2$-ary (binary) or $10$-ary (decimal) systems.

\begin{example}
Consider the ANS $\CS_2 = (L, \{0,1\}, <)$ such that $L = \{0\} \cup
 \{1\}\{0,1\}^*,$\\
 $0 < 1$. The strings in $L$ will be ordered by radix order as $\{0, 1, 10, 11, 100, \ldots\}$, with $\CS_2$ generating the binary system in this manner. 
That is, the mapping:
\begin{eqnarray*}
&&\rep_{\CS_2}(0) = 0, \; \rep_{\CS_2}(1) = 1, \; \rep_{\CS_2}(2) = 10,
 \rep_{\CS_2}(3) = 11, \; \rep_{\CS_2}(4) = 100, \; \ldots\\
&&\val_{\CS_2}(0) = 0, \;\; \val_{\CS_2}(1) = 1, \; \val_{\CS_2}(10) = 2,
 \; \val_{\CS_2}(11) = 3, \; \val_{\CS_2}(100) = 4, \; \ldots
\end{eqnarray*}
makes a correspondence between $\nat$ and its binary representation.
\ed
\end{example}

In fact, in terms of {\it recognizability}, it is known that an ANS is
more powerful than classical $k$-ary systems.
More details about ANSs are available from the work of Lecomte and 
Rigo \cite{Berth:2010:CAN:1941063}.
\section{Algorithms for ANSs using matrices and vectors}\label{sec:algorithm}
In this section, we provide algorithms for $\rep_\CS$ and $\val_\CS$.
The main mechanisms in these algorithms are sum and product operations on
a trim DFA and its matrix representation.
The concepts behind the algorithms in this section are described in
the work of Lecomte and Rigo \cite{Berth:2010:CAN:1941063}.
Here, we interpret these algorithms in terms of a matrix representation and analyze
their complexity.

\begin{figure}[t]
\begin{minipage}[t]{0.5\columnwidth}
\setlength{\doublerulesep}{.4pt}
\hrule\hrule\hrule\vspace{0.1cm}
\vspace{0.04cm}{\bf Algorithm 1 $\val_\CS$}
\vspace{0.1cm}\hrule
\input{img/val}
\vspace{1.57cm}\hrule
\end{minipage}
\begin{minipage}[t]{0.5\columnwidth}
\setlength{\doublerulesep}{.4pt}
\hrule\hrule\hrule\vspace{0.1cm}
{\bf Algorithm 2 $\rep_\CS$ (\cite{Berth:2010:CAN:1941063}, modified)}
\vspace{0.1cm}\hrule
\input{img/rep}
\vspace{0.1cm}\hrule
\end{minipage}
\end{figure}

\begin{table*}[t]
\centering{
\caption{The complexities of Algorithm 1 and Algorithm 2}
\label{table:complexity}
\setlength{\doublerulesep}{.4pt}
\begin{tabular}[t]{c | c | c c c}
\hline\hline\hline
 & Time complexity & M--M & M--V & V--V
 \\\hline
$\val_\CS(w)$ \; & $\Theta(|w| \times \mult(|w|) \times |\CD|^2) $ & $0$ &
			 \; $|w| - 1$ \; & 1\\
$\rep_\CS(n)$ \; & \; $\order(\ell \log \ell \times \mult(\ell \log \ell) \times \mmult(|\CD|))$ \; & \; $\order(\ell \log \ell)$ \; & $0$ & $\order(\ell \times |A|)$\\\hline
\multicolumn{5}{l}{Let $\ell = |\rep_\CS(n)|$. M--M, M--V, and V--V denote
 the number of matrix--matrix, }\\
 \multicolumn{5}{l}{matrix--vector, and vector--vector multiplications, respectively.}
\end{tabular}
}
\end{table*}

\label{sec:complexity}
We omit the details of Algorithm 1 ($\val_\CS$) and
Algorithm 2 ($\rep_\CS$) for reasons of brevity.
Table \ref{table:complexity} shows the complexities of Algorithm 1 and
Algorithm 2. 
The symbol $\mult(n)$ and $\mmult(n)$ denote the complexity of $n$-bit integer
multiplication and the number of factor multiplications in the $n \times
n$ matrix multiplication.
The implementation of these algorithms has been published
by the author
\footnote{\href{http://sinya8282.github.com/RANS/}{http://sinya8282.github.com/RANS/}}.

\begin{remark}\label{rem:multiplication}
In practice, there are efficient integer-multiplication algorithms such as the
 Sch\"onhage--Strassen algorithm ($\mult(n) = \order(n \log n \log\log
 n)$) or the F\"{u}rer algorithm ($\mult(n) = 2^{\order(\log^*{n})}$) \cite{Furer:2007:FIM:1250790.1250800}.
The matrix multiplication in the algorithm for $\rep_\CS$ is the
 bottleneck. The Strassen algorithm ($\mmult(n) =
 \order(n^{2.807})$) is widely known and can be used as an efficient matrix-multiplication
 algorithm \cite{Golub:1996:MC:248979}.
\ed
\end{remark}
\section{ANSs and compression}\label{sec:compression}
In this section, we extend the notion of a one-to-one correspondence
between $\nat$ and a language to one between two regular languages, which involves a base conversion in ANSs. Moreover, we investigate
its application to compression, especially its compression ratio.

\subsection{Base conversion in ANSs}\label{sec:conversion}
A base conversion $\conv{\CS}{\CS'}: L \rightarrow L'$ is an order
isomorphism function between L and L'.

\begin{definition}\label{def:conv}
Consider two ANSs, $\CS = (L, A, <)$ and $\CS' = (L', A', <')$. A base conversion $\conv{\CS}{\CS'}$ between $\CS$ and $\CS'$ is defined as:
\[
\Conv{\CS}{\CS'}(w) := \rep_{\CS'}(\val_\CS(w)).
\]
\ued
\end{definition}

\begin{remark}
Consider two ANSs, $\CS=(L, A, <)$ and $\CS'= (L', A', <')$, and two DFAs,
 $\CD$ and $\CD'$, such that $L = L(\CD), L' = L(\CD')$, and a string $w \in L$.
For the number $\ell = |\rep_{\CS'}(\val_\CS(w))|$, the complexity of a
 base conversion $\conv{\CS}{\CS'}$ is as follows (see Table \ref{table:complexity}):
\begin{eqnarray*}
\Theta(|w| \times \mult(|w|) \times |\CD|^2) + \order(\ell \log \ell
 \times \mult(\ell \log \ell) \times \mmult(|\CD'|)).
\end{eqnarray*}
\ued
\end{remark}

\subsection{Compression ratio}\label{sec:compression_ratio}
In general, the compression ratio is defined as the ratio of the length of the input string to
that of the output (compressed) string. This notion can be applied
naturally to a base conversion in ANSs.

\begin{definition}\label{def:ratio}
Consider two ANSs, $\CS = (L, A, <)$ and $\CS' = (L', A', <')$. The compression ratio
$\ratio$ of a base conversion $\conv{\CS}{\CS'}$ is defined as follows.
\begin{itemize}
\item For a string $w \in L: \;\;\;\; \ratio(\conv{\CS}{\CS'}, w) := \left|\conv{\CS}{\CS'}(w)\right| / \left|w\right|.$
 \item For a number $n \in \nat: \;\; \ratio(\conv{\CS}{\CS'}, n) := \ratio(\conv{\CS}{\CS'}, \max_{\prec}(L \cap A^n)).$\\
Note that $\max_{\prec}(L \cap A^n$) is the maximum, under the radix
	   order $\prec$, string that has the length $n$. 
\item For a regular language $L$, the limit of compression ratio is
	  defined as:
\[
 \ratio(\conv{\CS}{\CS'}) := \displaystyle\lim_{|w \in L|\rightarrow
	  \infty} \ratio(\conv{\CS}{\CS'}, w).
\]
We note that if the limit of compression ratio 
	  converges, then its limit value coincides with the average
	  compression ratio of $\conv{\CS}{\CS'}$:
\[
\displaystyle\lim_{|w \in L|\rightarrow \infty}
	  \ratio(\conv{\CS}{\CS'}, w) = \displaystyle\lim_{n \rightarrow
	  \infty}\frac{1}{n}\sum_{i=0}^{n}\ratio(\conv{\CS}{\CS'}, \rep_\CS(i)).
\]
\end{itemize}
The base conversion $\conv{\CS}{\CS'}$ is a ``compression'' if and only
 if $\ratio(\conv{\CS}{\CS'}) \leq 1$.\ed
\end{definition}

At the present time, it is not clear whether the limit of compression ratio of a base conversion
$\ratio(\conv{\CS}{\CS'})$ converges or diverges.

We now introduce Theorem \ref{theorem:ratio} to clarify the limit
value. This requires following Lemma \ref{lemma:counting_} about counting
functions.

\begin{lemma}
\label{lemma:counting_}
For any regular language $L$, the following equation holds:
\begin{eqnarray*}
C_L^{\leq}(n) =
\begin{cases}
\Theta(C_L(n)) = \Theta(n^{\pd{L}} \ind(L)^n) & \mathrm{if} \;
 \ind(L) > 1, \;\;\;\;\;\; \mathrm{(i)} \\
\Theta(n \times C_L(n)) = \Theta(n^{\pd{L}+1}) & \mathrm{if} \;
 \ind(L) = 1. \;\;\;\;\;\; \mathrm{(ii)}
\end{cases}
\end{eqnarray*}
\end{lemma}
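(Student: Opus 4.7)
The plan is to sandwich $C_L^{\leq}(n) = \sum_{i=0}^n C_L(i)$ between matching asymptotic bounds by separately handling the upper and lower estimates. For brevity write $\alpha = \ind(L)$ and $k = \pd{L}$, so that Theorem \ref{theorem:shur} amounts to $C_L(i) = \bar\Theta(i^k \alpha^i)$.

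For the upper bound I would invoke the $\order$-half of Theorem \ref{theorem:shur}, giving a constant $C$ with $C_L(i) \leq C\, i^k \alpha^i$ for all $i \geq 1$, and then sum term by term. When $\alpha > 1$, the partial sum $\sum_{i=0}^n i^k \alpha^i$ is a geometric-type series dominated up to a constant factor by its last term $n^k \alpha^n$ (factor out $\alpha^n$ and recognise the reverse-geometric tail), yielding $C_L^{\leq}(n) = \order(n^k \alpha^n)$; when $\alpha = 1$ the sum collapses to $\sum_{i=0}^n i^k = \Theta(n^{k+1})$, so $C_L^{\leq}(n) = \order(n^{k+1})$. Both upper bounds are then routine.

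The lower bound is the main obstacle. Theorem \ref{theorem:shur} only guarantees $C_L(n_j) \geq c\, n_j^k \alpha^{n_j}$ along some unspecified subsequence $\{n_j\}$, and a priori its gaps could be unbounded---without further information the trivial $C_L^{\leq}(n) \geq C_L(n_j)$ (with $n_j$ the largest index $\leq n$) does not suffice to extract $\Omega(n^k \alpha^n)$, and case (ii) is worse still. To close this, I would appeal to the linear-recurrence structure of $C_L$ via Lemma \ref{lemma:counting}: since $C_L(n) = \bm{V}_I \bm{M}^n \bm{V}_F$, Jordan decomposition of $\bm{M}$ expresses $C_L(n)$ as a finite combination $\sum_\beta p_\beta(n)\, \beta^n$ where $\beta$ ranges over eigenvalues of $\bm{M}$ and each $p_\beta$ is a polynomial whose degree is controlled by the corresponding Jordan block sizes. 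The dominant eigenvalues---those of modulus $\alpha$---share a common rotational period $p$ for any nonnegative integer matrix, so along each residue class $r$ modulo $p$ the value $C_L(pm + r)$ is asymptotic to $c_r (pm+r)^k \alpha^{pm+r}$ for some $c_r \geq 0$, and the $\bar\Theta$-half of Theorem \ref{theorem:shur} forces $c_r > 0$ for at least one $r$.

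With this bounded-gap structure the lower bounds fall out. In case (i), for any large $n$ I pick $n_j \in [n-p, n]$ in the good residue class and use $C_L^{\leq}(n) \geq C_L(n_j) = \Omega(n^k \alpha^n)$. In case (ii) I sum $c_r i^k$ over the good residues $i \leq n$ to obtain $\Omega(n^{k+1})$. I regard the quasi-polynomial / bounded-gap structure---classical for rational power series but only implicit in Theorem \ref{theorem:shur}---as the true technical heart of the argument; the remaining summation estimates are completely standard.
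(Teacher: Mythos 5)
Your proof is correct, and it is actually more careful than the paper's at the crucial lower-bound step. The paper derives $C_L^{\leq}(n) = \Omega(n^{\pd{L}} \ind(L)^n)$ from the $\bar\Omega$ bound by invoking nothing more than ``the monotonicity of $C_L^{\leq}$,'' but monotonicity alone does not suffice: if $n_j$ is the largest index $\leq n$ in the subsequence guaranteed by Definition~\ref{def:shurno}, then $C_L^{\leq}(n) \geq C_L^{\leq}(n_j) \geq c\, n_j^{\pd{L}} \ind(L)^{n_j}$, and for $\ind(L) > 1$ this is only $\Omega(n^{\pd{L}} \ind(L)^n)$ when the gap $n - n_j$ is bounded. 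You correctly single this out as the technical heart of the argument and close it by appealing to the quasi-polynomial structure of $C_L(n) = \bm{V}_I \bm{M}^n \bm{V}_F$: by Perron--Frobenius theory for nonnegative integer matrices, the eigenvalues of maximal modulus are $\ind(L)$ times roots of unity of bounded order, so the ``good'' residue classes recur with a fixed period $p$. This gives the bounded-gap property in case (i) and the linear density of good indices needed for the $\Omega(n^{\pd{L}+1})$ bound in case (ii). The upper bounds (geometric-type domination when $\ind(L) > 1$, power-sum estimate when $\ind(L) = 1$) match the paper's. In short, you take the same overall decomposition into upper and lower estimates, but where the paper waves at monotonicity you supply the missing rational-power-series argument that actually makes the lower bound go through; this is a genuine improvement in rigor over the published proof.
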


\begin{proof}
In case (i), the equation $C_L^{\leq}(n) = \bar\Omega(n^\pd{L} \ind(L)^n)$ always
 holds because $C_L(n) \leq C_L^{\leq}(n)$ by definition. From the
 monotonicity of the function $C_L^{\leq}$, the following equations hold:
\begin{eqnarray}
C_L^{\leq}(n) &=& \Omega(n^\pd{L} \ind(L)^n),\label{lem2:eq1}\\
C_L^{\leq}(n) &=& \order\left( \sum_{i=0}^n i^\pd{L} \ind(L)^i
						\right)
 = \order\left(n^\pd{L} \sum_{i=0}^n \ind(L)^i\right)\nonumber\\
 &=& \order(n^\pd{L} \ind(L)^n)\label{lem2:eq2}.
\end{eqnarray}
From Equations \eqref{lem2:eq1} and \eqref{lem2:eq2}, the upper and
 lower bounds are equal.
\[
\so \;\;\;\; C_L^{\leq}(n) = \Theta(n^\pd{L} \ind(L)^n).
\]

In case (ii), by Theorem \ref{theorem:shur}, we can obtain the following equation:
\begin{eqnarray}
C_L^{\leq}(n) = \bar\Theta\left( \sum_{i=0}^n i^\pd{L} \ind(L)^i
							\right)
 = \bar\Theta\left(\sum_{i=0}^n i^\pd{L}\right)
 = \bar\Theta(n^{\pd{L}+1}). \label{lem2:eq3}
\end{eqnarray}
Note that the last transformation in Equation \eqref{lem2:eq3} uses a power sum formula. 
Clearly, $C_L^{\leq}(n) = \Omega(n^{\pd{L}+1})$ holds from the
 monotonicity of the function $C_L^{\leq}$.
\[
 \so \;\;\;\; C_L^{\leq}(n) = \Theta(n^{\pd{L}+1}).
\]
\ued
\end{proof}

\begin{theorem}[The compression ratio theorem]
\label{theorem:ratio}
Consider two ANSs, $\CS = (L, A, <)$ and $\CS' = (L', A', <')$. The average
 compression ratio of a base conversion $\conv{\CS}{\CS'}: L
 \rightarrow L'$ satisfies the following conditions:
 \begin{enumerate}
   \renewcommand{\labelenumi}{(\roman{enumi})}
  \item If $\ind(L') > 1$, then $\ratio(\conv{\CS}{\CS'}) = \log\ind(L)
		 / \log\ind(L')$,
   \item If $\ind(L) > \ind(L') = 1$, then $\ratio(\conv{\CS}{\CS'}) = \infty$,
   \item If $\ind(L) = \ind(L') = 1$, then:
		 \begin{eqnarray*}
		   \ratio(\conv{\CS}{\CS'}) =
		   \begin{cases}
			 0 & \text{if} \;\;\; \pd{L} < \pd{L'},\\
			 \text{in a finite interval} & \text{if} \;\;\; \pd{L} = \pd{L'},\\
			 \infty & \text{if} \;\;\; \pd{L} > \pd{L'}.
		   \end{cases}
		 \end{eqnarray*}
 \end{enumerate}
\end{theorem}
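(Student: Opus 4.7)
The plan is to reduce the statement to an asymptotic inversion of the cumulative counting function $C_{L'}^{\leq}$. By Definition~\ref{def:ratio}, $\ratio(\conv{\CS}{\CS'}, n)$ is computed at the radix-maximum string $w_n := \max_\prec(L \cap A^n)$, which by construction of $\val_\CS$ satisfies $\val_\CS(w_n) = C_L^{\leq}(n) - 1$. Its image length $\ell_n := |\conv{\CS}{\CS'}(w_n)|$ is therefore the unique $\ell$ with
\[
C_{L'}^{\leq}(\ell - 1) \;\leq\; C_L^{\leq}(n) - 1 \;<\; C_{L'}^{\leq}(\ell),
\]
so the whole problem becomes: invert $C_{L'}^{\leq}$ at the value $C_L^{\leq}(n)$ and compare the result to $n$. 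Lemma~\ref{lemma:counting_} supplies matching two-sided $\Theta$-bounds for both $C_L^{\leq}$ and $C_{L'}^{\leq}$, which makes this inversion unambiguous.

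For case~(i), where $\ind(L') > 1$, I would substitute $C_{L'}^{\leq}(\ell_n) = \Theta(\ell_n^{\pd{L'}} \ind(L')^{\ell_n})$ and take logarithms to get
\[
\ell_n \log \ind(L') + \pd{L'} \log \ell_n + \order(1) \;=\; \log C_L^{\leq}(n).
\]
When $\ind(L) > 1$, the right-hand side equals $n \log \ind(L) + \pd{L} \log n + \order(1)$, so dividing by $n$ and letting $n \to \infty$ yields $\ell_n/n \to \log\ind(L)/\log\ind(L')$. When $\ind(L) = 1$, the right-hand side is only $\order(\log n)$, which forces $\ell_n/n \to 0$; this is consistent with the formula because $\log 1 = 0$.

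Cases~(ii) and~(iii) use the same recipe but with $C_{L'}^{\leq}(\ell) = \Theta(\ell^{\pd{L'}+1})$. For case~(ii), $C_L^{\leq}(n)$ is exponential in $n$, so inversion gives $\ell_n = \Theta\bigl((n^{\pd{L}} \ind(L)^n)^{1/(\pd{L'}+1)}\bigr)$, which is itself exponential in $n$ and forces $\ell_n/n \to \infty$. For case~(iii), both cumulative counts are polynomial, yielding $\ell_n = \Theta\bigl(n^{(\pd{L}+1)/(\pd{L'}+1)}\bigr)$, and the three subcases fall out immediately from comparing the exponent $(\pd{L}+1)/(\pd{L'}+1)$ to $1$.

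The main obstacle I foresee is the mismatch between Shur's weak $\bar\Theta$-bound on $C_L$ and the genuine $\Theta$-bound that inversion requires, since $\bar\Theta$ only guarantees the lower bound along a subsequence. Fortunately, Lemma~\ref{lemma:counting_} has already upgraded this to a two-sided $\Theta$-estimate for the cumulative $C_L^{\leq}$ by exploiting its monotonicity, so the inversion step is safe. A secondary technicality is extending the analysis from the extremal witness $w_n$ to arbitrary $w \in L$ of length $n$, needed to align the two equivalent formulations of $\ratio(\conv{\CS}{\CS'})$ in Definition~\ref{def:ratio}; since $C_L^{\leq}(n-1) \leq \val_\CS(w) \leq C_L^{\leq}(n) - 1$ and the growth ratio $C_L^{\leq}(n)/C_L^{\leq}(n-1)$ is uniformly bounded in each of the three regimes, the output length for $w$ differs from $\ell_n$ only within the $\Theta$-tolerances already present, so the claimed limits carry over.
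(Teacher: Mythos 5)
Your proposal is correct and takes essentially the same approach as the paper: both reduce the problem to the two-sided $\Theta$-bounds on $C_L^{\leq}$ and $C_{L'}^{\leq}$ supplied by Lemma~\ref{lemma:counting_}, take logarithms, and compare leading-order terms as the length tends to infinity. Your framing as an explicit inversion of $C_{L'}^{\leq}$ via the extremal witness $w_n = \max_\prec(L\cap A^n)$ is a cosmetic variant of the paper's route, which works directly from $\val_\CS(w)=\val_{\CS'}(w')$ sandwiched between consecutive cumulative counts; your closing remark about extending from the extremal witness to arbitrary $w$ is precisely the bookkeeping the paper avoids by not detouring through $w_n$ in the first place.
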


\begin{proof}
Consider $w \in L$ and $w' \in L'$ such that $w' = \conv{\CS}{\CS'}(w)$. That
 is, $\val_\CS(w) = \val_\CS'(w)$. From Definition \ref{def:ratio}, the following
 equation holds:
\begin{eqnarray*}
\ratio(\conv{\CS}{\CS'}) = \displaystyle\lim_{|w \in L| \rightarrow \infty} \frac{\left|w'\right|}{\left|w\right|}.
\end{eqnarray*}
Moreover, in the case of $\lim_{|w \in L| \rightarrow
 \infty}$, for each value of $\val_{\CS}(w),\val_{\CS'}(w')$ and $|w'|$ diverge
 to infinity. This is because the following inequalities:
\begin{eqnarray}
C_L^{\leq}(|w|-1) \leq \val_{\CS}(w) < C_L^{\leq}(|w|)\label{thm2:eq1},\\
C_L'^{\leq}(|w'|-1) \leq \val_{\CS'}(w') < C_L'^{\leq}(|w'|)\label{thm2:eq2}
\end{eqnarray}
clearly always hold, from Definition \ref{def:ans}.
We can then give a proof for each of the conditions (i)--(iii) based on
 Inequalities \eqref{thm2:eq1} and \eqref{thm2:eq2} and the following Equation \eqref{thm2:eq3}:
\begin{eqnarray}
\val_{\CS}(w) = \val_{\CS'}(w')\label{thm2:eq3}.
\end{eqnarray}
In the following derivations, we assume that $|w|$ is sufficiently large for
 Theorem 1 to apply.

Considering condition (i), the following inequality is obtained by applying
 Lemma \ref{lemma:counting_} in Equation \eqref{thm2:eq2} and using some
 constants $u', l'$:
\begin{eqnarray*}
l'|w'|^{\pd{L'}}\ind(L')^{|w'|} \leq \val_{\CS'}(w') \leq u'|w'|^{\pd{L'}}\ind(L')^{|w'|}.
\end{eqnarray*}
Similarly, assuming that $\ind(L) > 1$, the following inequality is obtained by applying Equation \eqref{thm2:eq1} and using some constants $u, l$:
\begin{eqnarray*}
l|w|^{\pd{L}}\ind(L)^{|w|} \leq \val_{\CS}(w) \leq u|w|^{\pd{L}}\ind(L)^{|w|}.
\end{eqnarray*}

There will then exist two positive real numbers $x,x'
 \in \real$ such that $l \leq x \leq u, l' \leq x' \leq u'$, and\[
x|w|^{\pd{L}}\ind(L)^{|w|} = x'|w'|^{\pd{L'}}\ind(L')^{|w'|} 
\]
because of Equation \eqref{thm2:eq3}. The following equation is obtained by
 taking the logarithm of the above equation:
\[
  |w|\log{\ind(L)} + \pd{L}\log{|w|} + \log{x}
= |w'|\log{\ind(L')} + \pd{L'}\log{|w'|} + \log{x'}.
\]
Dividing this equation by $|w'|\log\ind(L') \neq 0$, the following
 equation is obtained:
\[
\frac{|w|\log{\ind(L)}}{|w'|\log{\ind(L')}} +
 \frac{\pd{L}\log{|w|} + \log{x}}{|w'|\log{\ind(L')}} 
= 1 + \frac{\pd{L'}\log{|w'|} + \log{x'}}{|w'|\log{\ind(L')}}.
\]
The second terms of both sides of the above equations converge to zero in
 the limit $\lim_{|w| \rightarrow \infty}$. The following equation therefore holds:
\[
\frac{|w|}{|w'|}\frac{\log{\ind(L)}}{\log{\ind(L')}} = 1.
\]
The factor $|w|/|w'|$ on the left-hand side is the reciprocal of
 the average compression ratio. 
\[
 \so \;\;\;\;\;\; \ratio(\conv{\CS}{\CS'}) = \frac{\log\ind(L)}{\log\ind(L')}.
\]
Similarly, it is easy to prove that $\ratio(\conv{\CS}{\CS'})$ converges
 to zero in the case of $\ind(L) = 1$.
Considering condition (ii), we can also
 prove that $\ratio(\conv{\CS}{\CS'})$ diverges to infinity, by a similar
 argument.

Considering condition (iii), by a similar argument to that for condition (i),
there will exist two positive real numbers $x, x'
 \in \real$ such that
 \[
 x|w|^{\pd{L}+1} = x'|w'|^{\pd{L'}+1}.					   
 \]
The following equation is obtained by
 dividing this equation by $x'|w|^{\pd{L'}+1}$:
\[
\frac{x}{x'} |w|^{\pd{L}-\pd{L'}} = \frac{|w'|^{\pd{L'}+1}}{|w|^{\pd{L'}+1}}
 = \left( \frac{|w'|}{|w|} \right)^{\pd{L'}+1}.
\]
The right-hand side is the $\pd{L'}+1$-th power of the average
 compression  ratio. Therefore, the following holds:
\begin{eqnarray}
\sqrt[\pd{L'}+1]{\frac{x}{x'} |w|^{\pd{L}-\pd{L'}}} = \ratio(\conv{\CS}{\CS'}).\label{thm2:eq4}
\end{eqnarray}
In the limit $\lim_{|w| \rightarrow\infty}$, the left-hand side
 of Equation \eqref{thm2:eq4} is:
\[
   \ratio(\conv{\CS}{\CS'}) =
	\begin{cases}
	0 & \text{if} \;\; \pd{L} < \pd{L'},\\
	\sqrt[\pd{L'}+1]{x/x'} & \text{if} \;\; \pd{L} = \pd{L'},\\
	\infty & \text{if} \; \pd{L} > \pd{L'}.
	\end{cases}
\]
Although the positive real numbers $x$ and $x'$ are in a finite
 interval, we cannot make further assumptions about convergence in the
 case of $\pd{L} = \pd{L'}$.
\qed
\end{proof}
\section{Block-based compression}\label{sec:application}\label{sec:sepcomp}
For a language with exponential growth, large integer arithmetic  is
required, particularly for the computations $\val_\CS: L
\rightarrow \nat$ and its inverse $\rep_\CS$. In a real
machine, large integer multiplication implies a correspondingly large cost 
(see Remark \ref{rem:multiplication}).
This implies that ANS-based compression for a large input string
may be impractical. The goal of this section is to propose a block-based
compression method for addressing this drawback via a {\it factorial language}.

\begin{definition}\cite{Sakarovitch:2009:EAT:1629683,Shur:2008:CCR:1813695.1813728}
We denote the set of all substrings in $L$ as $\fac(L)$.
A regular language $L$ is called a {\it factorial language} if and only if $L =
 \fac(L)$. \ed
\end{definition}

\begin{remark}\label{remark:factorial}
For any regular language $L$, its factorial language $\fac(L)$
 is also regular. Consider an automaton $\CA = (Q, A, \delta, I, F)$. The
 automaton $\CA' = (Q, A, \delta, Q, Q)$ will recognize the
 factorial language $\fac(L(\CA))$. \ed
\end{remark}

\begin{fact}[\cite{Shur:2008:CCR:1813695.1813728}]
\label{fact:factorial}
If a regular language $L$ has the complexity $\bar\Theta(f(n))$ for
some function $f$, then the language $\fac(L)$ has the complexity
$\bar\Theta(f(n))$ as well. \ed
\end{fact}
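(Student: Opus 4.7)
The plan is to invoke Theorem~\ref{theorem:shur} to reduce the claim to a concrete form. Writing $\CD$ for a trim DFA of $L$, Theorem~\ref{theorem:shur} lets us take $f(n) = n^{\pd{L}}\ind(L)^n$ without loss of generality. Unfolding Definition~\ref{def:shurno}, it then suffices to establish (a) an $\order(f(n))$ upper bound on $C_{\fac(L)}(n)$ and (b) a subsequential lower bound $C_{\fac(L)}(n_i) \geq c f(n_i)$ for some $c > 0$.

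The lower bound (b) is immediate. Every string is a substring of itself, so $L \subseteq \fac(L)$, giving $C_L(n) \leq C_{\fac(L)}(n)$ pointwise. The hypothesis $C_L(n) = \bar\Theta(f(n))$ already supplies a subsequence witnessing the lower bound, and the same subsequence works for $C_{\fac(L)}$.

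For the upper bound (a) I would use Remark~\ref{remark:factorial}: if $\CD = (Q,A,\delta,I,F)$ is trim for $L$, then $\fac(L) = L(\CA')$ for $\CA' = (Q,A,\delta,Q,Q)$. The crucial observation is that although $\CA'$ has many initial and accepting states, $\delta$ itself remains deterministic, so from each fixed start state at most one length-$n$ path spells any given string. Summing over start states gives
\[
C_{\fac(L)}(n) \;\leq\; \sum_{p,q \in Q} \bigl(\bm{M}(\CD)^n\bigr)_{pq}
 \;=\; \bm{1}^T \bm{M}(\CD)^n \bm{1}.
\]

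The main obstacle is then the purely spectral estimate $\bm{1}^T \bm{M}(\CD)^n \bm{1} = \order(n^{\pd{L}}\ind(L)^n)$. This is the same linear-algebraic content that underlies Theorem~\ref{theorem:shur}: decomposing $\CD$ into SCCs, a path that threads exactly $k$ SCCs of index $\ind(L)$ contributes $\order(n^{k-1}\ind(L)^n)$ by a standard Jordan-form / convolution-of-geometric-series argument, and by definition the maximum such $k$ is $\pd{L}+1$. Summing over the (constant-many) chains yields the required $\order(n^{\pd{L}}\ind(L)^n)$ bound, closing the proof. I expect the bookkeeping in this final step to be the only nontrivial part; everything else is a one-line reduction.
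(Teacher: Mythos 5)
The paper does not prove Fact~\ref{fact:factorial}: it is imported from Shur~\cite{Shur:2008:CCR:1813695.1813728} and stated without argument, so there is no in-paper proof to compare against; your argument has to be judged on its own.

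As a self-contained derivation it is essentially sound. The lower bound is clean: $L \subseteq \fac(L)$ gives $C_L(n) \leq C_{\fac(L)}(n)$ pointwise, so the witnessing subsequence from the hypothesis transfers directly. The upper bound correctly exploits that the automaton $\CA' = (Q,A,\delta,Q,Q)$ of Remark~\ref{remark:factorial} keeps $\delta$ deterministic, so each start state contributes at most one run per string and $C_{\fac(L)}(n) \leq \sum_{p,q}(\bm{M}(\CD)^n)_{pq}$; and the key structural observation --- that $\ind(\cdot)$ and $\pd{\cdot}$ are invariants of the underlying digraph alone and do not see $I$ or $F$, so $\CD$ and $\CA'$ share the same index and polynomial index --- is exactly what lets the SCC/Jordan-form estimate for the total path count close the argument. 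The one soft spot is the opening reduction, ``Theorem~\ref{theorem:shur} lets us take $f(n) = n^{\pd{L}}\ind(L)^n$ without loss of generality.'' The relation $\bar\Theta$ of Definition~\ref{def:shurno} is not symmetric, so $C_L(n) = \bar\Theta(f(n))$ together with $C_L(n) = \bar\Theta(n^{\pd{L}}\ind(L)^n)$ does not by itself give the pointwise comparison $n^{\pd{L}}\ind(L)^n = \order(f(n))$ that your upper bound on $C_{\fac(L)}$ needs. This is harmless in the intended reading (in Shur's setting $f$ is always of the polynomial--exponential form $n^k\alpha^n$, where $\bar\Theta$ does pin $f$ down up to constants), but it deserves a sentence making that assumption explicit.
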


Any substring of a string in factorial language $L$ is also in $L$.
We can therefore partition the input string freely if the base conversion has the
domain of a factorial language. 
Algorithm 3 shows the application of block-based compression using a
factorial language. (Here, we assume that the length of the input string is a
multiple of the block length. That is, $|w| = m\ell$ for some $m$.)

\begin{figure}[t]
\setlength{\doublerulesep}{.4pt}
\hrule\hrule\hrule\vspace{0.1cm}
{\bf Algorithm 3: Block-based compression using a factorial language}
\vspace{0.1cm}\hrule\vspace{0.1cm}
\input{img/sep}
\vspace{0.1cm}\hrule
\end{figure}
\begin{figure}[t]
\begin{eqnarray*}
&&w\; = \overbrace{w_1 w_2 \ldots w_{\ell}}^{w_{[1]}}
 \overbrace{w_{\ell + 1} w_{\ell + 2} \ldots w_{2\ell}}^{w_{[2]}}
 w_{2\ell+1} \ldots
 w_{m\ell - \ell} \overbrace{w_{m\ell - \ell + 1}
 w_{m\ell - \ell + 2} \ldots w_{m\ell}}^{w_{[m]}}\\
&&
 \;\;\;\;\;\;\;\;\;\;\;\;\;\;\;\;\;\;\;\;\;\;\;\;\;\;\;\;\;\;\;\;\;
 \;\;\;\;\;\;\;\;\;\;\;\;\;\;\;\;\;\;\;\;\;\;\;\;\;
 \Downarrow\\
&& w' = \ell_{1} \; \conv{\fac(\CS)}{\CS'}(w_{[1]}) \;
 \ell_{2} \; \conv{\fac(\CS)}{\CS'}(w_{[2]}) \; \ldots \; \ell_{m}
 \; \conv{\fac(\CS)}{\CS'}(w_{[m]})
\end{eqnarray*}\vspace{-.7cm}
\caption{An image of the input and output for a block-based compression}
\label{fig:block}
\end{figure}

\begin{remark}\label{remark:sepcomp}
In Step 4 of Algorithm 3, each length of $w_{[i]}$ is set by $\ell$.
Each length of $\conv{\fac(\CS)}{\CS'}(w_{[i]})$ is therefore in the
 interval $[\ell_{min}, \ell_{max}]$, where
 \[
  \ell_{max} = \max_{w \in A^\ell} |\conv{\fac(\CS)}{\CS'}(w)|, \;\;\;\;
 \ell_{min} = \min_{w \in A^\ell} |\conv{\fac(\CS)}{\CS'}(w)|.
 \]
 Therefore, each description length of $\ell_{i}$ will be $\log(\ell_{max}-\ell_{min}+1)$.
\ed
\end{remark}

\begin{theorem}\label{theorem:sepcomp}
Consider a base conversion $\conv{\CS}{\CS'}: L \rightarrow L'$. The
 corresponding block-based compression using $\fac(\CS)$ will have the
 same compression ratio if the block length $\ell$ is sufficiently
 large.
\end{theorem}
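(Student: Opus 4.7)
The plan is to estimate the total length produced by Algorithm~3 and show that, divided by the input length $m\ell$, it tends to $\ratio(\conv{\CS}{\CS'})$ as $\ell\to\infty$. The output has two parts: the $m$ compressed blocks and the $m$ length descriptors, for a total of
\[
\sum_{i=1}^{m}|\conv{\fac(\CS)}{\CS'}(w_{[i]})|\;+\;m\log(\ell_{max}-\ell_{min}+1).
\]
So I need to control two quantities independently: the typical per-block compressed length, and the descriptor overhead $\log(\ell_{max}-\ell_{min}+1)$.

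First I would identify the per-block rate with $\ratio(\conv{\CS}{\CS'})$. By Fact~\ref{fact:factorial} combined with Theorem~\ref{theorem:shur}, the counting function $C_{\fac(L)}$ has the same $\bar\Theta(n^{p}c^{n})$ form as $C_L$, so $\ind(\fac(L))=\ind(L)$ and $\pd{\fac(L)}=\pd{L}$; Theorem~\ref{theorem:ratio} applied to $\conv{\fac(\CS)}{\CS'}$ then yields $\ratio(\conv{\fac(\CS)}{\CS'})=\ratio(\conv{\CS}{\CS'})$. Thus the block conversion has the desired limiting rate.

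The key step is to bound the descriptor overhead. For $w\in A^{\ell}\cap\fac(L)$, Definition~\ref{def:ans} gives $\val_{\fac(\CS)}(w)\in[C_{\fac(L)}^{\leq}(\ell-1),\,C_{\fac(L)}^{\leq}(\ell))$, and $|\conv{\fac(\CS)}{\CS'}(w)|$ is the unique $k$ with $C_{L'}^{\leq}(k-1)\le\val_{\fac(\CS)}(w)<C_{L'}^{\leq}(k)$. In the principal case $\ind(L),\ind(L')>1$, Lemma~\ref{lemma:counting_} gives $C_{\fac(L)}^{\leq}(\ell)/C_{\fac(L)}^{\leq}(\ell-1)=\Theta(\ind(L))$ and $C_{L'}^{\leq}(k)/C_{L'}^{\leq}(k-1)=\Theta(\ind(L'))$, so comparing the ratios that bracket $\ell_{max}$ and $\ell_{min}$ forces $\ind(L')^{\ell_{max}-\ell_{min}+1}=\Theta(\ind(L))$, hence $\ell_{max}-\ell_{min}=O(1)$. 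Taking logarithms in the defining bracketings gives $\ell_{max},\ell_{min}=r\ell+O(\log\ell)$ with $r=\log\ind(L)/\log\ind(L')$. The polynomial-growth cases (ii) and (iii) of Theorem~\ref{theorem:ratio} are handled analogously by substituting the $\Theta(n^{\pd{L}+1})$ estimate of Lemma~\ref{lemma:counting_} for the corresponding side.

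Assembling the pieces, the total output length is $m\bigl(r\ell+O(\log\ell)\bigr)+m\cdot O(1)$, and dividing by $m\ell$ yields $r+O(\log\ell/\ell)$, which converges to $\ratio(\conv{\CS}{\CS'})$ as $\ell\to\infty$. The main obstacle is establishing $\ell_{max}-\ell_{min}=O(1)$ uniformly across the cases of Theorem~\ref{theorem:ratio}: in the polynomial-growth regime the clean exponential-ratio argument collapses, and one must rely on the $\bar\Theta$-estimates of Lemma~\ref{lemma:counting_} together with the monotonicity of $C_{L}^{\leq}$ to bound the descriptor overhead and to show that each block still compresses into a length tightly concentrated around $r\ell$.
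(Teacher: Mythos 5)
Your proof follows the same two-step outline as the paper's: (1) invoke Fact~\ref{fact:factorial} to conclude $\ratio(\conv{\fac(\CS)}{\CS'})=\ratio(\conv{\CS}{\CS'})$, and (2) argue that the per-block descriptor overhead $\log(\ell_{max}-\ell_{min}+1)$ is negligible relative to $\ell$. Where you diverge is in step (2). The paper disposes of it in one sentence (``it will be constant if $\ell$ is fixed. The description length can therefore be ignored whenever $\ell$ is sufficiently large''), which tacitly assumes that $\log(\ell_{max}-\ell_{min}+1)=o(\ell)$ without establishing it. You actually prove the needed bound: by comparing the bracketing intervals $[C_{L'}^{\leq}(k-1),C_{L'}^{\leq}(k))$ that determine the output lengths against the range $[C_{\fac(L)}^{\leq}(\ell-1),C_{\fac(L)}^{\leq}(\ell))$ of input values, you show $\ell_{max}-\ell_{min}=O(1)$ in the exponential regime, so the overhead per input symbol is $O(1/\ell)$. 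This is a genuine strengthening of the paper's argument; the extra quantitative estimate $\ell_{max},\ell_{min}=r\ell+O(\log\ell)$ is also useful and is consistent with the bounds derived in the proof of Theorem~\ref{theorem:ratio}. Your closing caveat about the polynomial-growth cases is well taken, though note that in those cases either the ratio is $0$ or $\infty$ (so the claim is vacuous or easy), or $\pd{L}=\pd{L'}$ holds and the same bracketing argument still yields $\ell_{max}-\ell_{min}=O(1)$; and in all cases $\ell_{max}$ is at most polynomial in $\ell$ unless $\ratio=\infty$, so $\log(\ell_{max}-\ell_{min}+1)/\ell\to 0$ always. In short: same route, but you supply the missing estimate the paper leaves implicit.
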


\begin{proof}
For any regular languages $L$ and $L'$, $\ratio(\conv{\CS}{\CS'}) =
 \ratio(\conv{\fac(\CS)}{\CS'})$ will hold, following Fact \ref{fact:factorial}.
 In addition, each description length of $\ell_{i}$ is 
$\log(\ell_{max}-\ell_{min}+1)$, following Remark \ref{remark:sepcomp}. It will be
 constant if $\ell$ is fixed.
 The description length can therefore be ignored whenever $\ell$ is sufficiently large.
\qed
\end{proof}
\section{Conclusion}\label{sec:conclusion}
We proposed and investigated ANS-based compression in
\S \ref{sec:compression} and \S \ref{sec:application}.
The main idea of an ANS-based compression is to use a base conversion in the ANS.
We analyzed the complexity of the primitives of ANSs, namely $\rep_\CS$ and
$\val_\CS$ (see Table \ref{table:complexity}).
In addition, we clarified the behavior of the average compression ratio for
ANS-based compression in Theorem \ref{theorem:ratio}.
Finally, we proposed a block-based compression method that does not reduce the
compression ratio in \S \ref{sec:application}.

The author has implemented an ANS-based compression library
as open-source software 
\footnote
{\href{http://sinya8282.github.com/RANS/}{http://sinya8282.github.com/RANS/}}.
This will enable others to verify the present paper's results.\\

{\bf Acknowledgement} My deepest appreciation goes to my advisor, Prof. Sassa (Tokyo Institute of Technology), whose enormous support and insightful
comments were invaluable during the course of my study. 
Special thanks also go to Prof. Rigo (Universit\'e de Li\`ege), whose
comments and publications \cite{Berth:2010:CAN:1941063} have helped me
very much throughout the production of this study.

\bibliographystyle{splncs.bst}
\bibliography{ref}

\begin{thebibliography}{10}

\bibitem{DBLP:journals/tit/CharikarLLPPSS05}
Charikar, M., Lehman, E., Liu, D., Panigrahy, R., Prabhakaran, M., Sahai, A.,
  Shelat, A.:
\newblock The smallest grammar problem.
\newblock IEEE Transactions on Information Theory \textbf{51}(7) (2005)
  2554--2576

\bibitem{Goldberg:1985:CR:22145.22194}
Goldberg, A., Sipser, M.:
\newblock Compression and ranking.
\newblock In: Proceedings of the seventeenth annual ACM symposium on Theory of
  computing. STOC '85, New York, NY, USA, ACM (1985)  440--448

\bibitem{Li:2008:IKC:1478784}
Li, M., Vit{\`a}nyi, P.M.:
\newblock An Introduction to Kolmogorov Complexity and Its Applications. 3 edn.
\newblock Springer Publishing Company, Incorporated (2008)

\bibitem{DBLP:journals/corr/cs-OH-9903005}
Lecomte, P.B.A., Rigo, M.:
\newblock Numeration systems on a regular language.
\newblock CoRR \textbf{cs.OH/9903005} (1999)

\bibitem{Berth:2010:CAN:1941063}
Berth{\'e}, V., Rigo, M.:
\newblock Combinatorics, Automata and Number Theory. 1st edn.
\newblock Cambridge University Press, New York, NY, USA (2010)

\bibitem{Sakarovitch:2009:EAT:1629683}
Sakarovitch, J.:
\newblock Elements of Automata Theory.
\newblock Cambridge University Press, New York, NY, USA (2009)

\bibitem{DBLP:journals/corr/abs-1010-5456}
Shur, A.M.:
\newblock Combinatorial characterization of formal languages.
\newblock CoRR \textbf{abs/1010.5456} (2010)

\bibitem{journals/mst/ChoffrutG95a}
Choffrut, C., Goldwurm, M.:
\newblock Rational transductions and complexity of counting problems.
\newblock Mathematical Systems Theory \textbf{28}(5) (1995)  437--450

\bibitem{Shur:2008:CCR:1813695.1813728}
Shur, A.M.:
\newblock Combinatorial complexity of regular languages.
\newblock In: Proceedings of the 3rd international conference on Computer
  science: theory and applications. CSR'08, Berlin, Heidelberg, Springer-Verlag
  (2008)  289--301

\bibitem{Golub:1996:MC:248979}
Golub, G.H., Van~Loan, C.F.:
\newblock Matrix computations (3rd ed.).
\newblock Johns Hopkins University Press, Baltimore, MD, USA (1996)

\bibitem{Allouche:2003:AST:861371}
Allouche, J.P., Shallit, J.:
\newblock Automatic Sequences: Theory, Applications, Generalizations.
\newblock Cambridge University Press, New York, NY, USA (2003)

\bibitem{Furer:2007:FIM:1250790.1250800}
F\"{u}rer, M.:
\newblock Faster integer multiplication.
\newblock In: Proceedings of the thirty-ninth annual ACM symposium on Theory of
  computing. STOC '07, New York, NY, USA, ACM (2007)  57--66

\end{thebibliography}
\end{document}